\theoremstyle{plain}
\newtheorem{thm}{Theorem}
\def\theequation{\arabic{section}.\arabic{equation}}
\newcommand{\be}{\begin{eqnarray}}
\newcommand{\ee}{\end{eqnarray}}
\newcommand{\bc}{\begin{center}}
\newcommand{\ec}{\end{center}}
\newcommand{\nn}{\nonumber \\}
\newcommand{\lb}{\label}
\newcommand{\p}[1]{(\ref{#1})}
\begin{document}

\begin{titlepage}

\vspace*{0.2cm}

\renewcommand{\thefootnote}{\star}
\begin{center}

{\LARGE\bf  Comments on noncommutative quantum mechanical systems associated with Lie algebras}

\vspace{2cm}

{\Large Andrei Smilga} \\

\vspace{0.5cm}

{\it SUBATECH, Universit\'e de
Nantes,  4 rue Alfred Kastler, BP 20722, Nantes  44307, France. }

\end{center}
\vspace{0.2cm} \vskip 0.6truecm \nopagebreak

   \begin{abstract}
\noindent  

We consider quantum mechanics on the noncommutative spaces characterized by the commutation relations
$$ [x_a, x_b] \ =\ i\theta f_{abc} x_c\,, $$
where $f_{abc}$ are the structure constants of a Lie algebra. We note that this problem can be reformulated as an ordinary quantum problem in a commuting {\it momentum} space. The coordinates are then represented as linear differential operators $\hat x_a = -i \hat D_a = -iE_{ab} (p)\,
\partial /\partial p_b $. Generically, the matrix $E_{ab}(p)$ represents a certain infinite series over the deformation parameter $\theta$: $E_{ab} = \delta_{ab} + \ldots$. The deformed Hamiltonian,
$\hat H \ =\ - \frac 12  \hat D_a^2\,, $ describes the motion along the corresponding group manifolds with the characteristic size of order $\theta^{-1}$. Their metrics are also expressed into  certain infinite series in $\theta$, with  $E_{ab}$ having the meaning of vielbeins.

For the algebras $su(2)$ and $u(N)$, it has been possible to represent the operators $\hat x_a$ in a simple finite form. A byproduct of our study are new nonstandard
formulas for the metrics on  all the spheres $S^n$, on the corresponding projective spaces $RP^n$ and on $U(2)$. 
   \end{abstract}

\end{titlepage}

\setcounter{footnote}{0}

\setcounter{equation}0

\section{Introduction} 
The quantum mechanical systems defined  on noncommutative spaces   \cite{noncom} have recently attracted a considerable interest \cite{brmog}.\footnote{Also noncommutative {\it field theories} have been intensely studied \cite{Nekrasov}, but they are not the subject of our present discussion.} The simplest such system involves only two dynamic variables $x,y$ with a nontrivial commutator 
 \be
\lb{xy-comm}
[x,y] \ =\ i\theta \,.
\ee
Probably, the best way to deal with such a system \cite{Delduc} is to go over to the  momentum space,
$(p_x, p_y) \equiv (X,Y)$, in which case $X$ and $Y$ commute, while the original noncommuting coordinates $x,y$ can be interpreted as differential operators, which 
 are quantum conterparts of the  velocity components   of the moving
 particle of unit mass. 
\be
x &\equiv& \hat v_X \ =\ -i \frac \partial {\partial X} + \frac \theta 2  Y \,, \nn
y &\equiv& \hat v_Y \ =\ -i \frac \partial {\partial Y} - \frac \theta 2  X \,. \nn
\ee
A naturally defined Hamiltonian, 
$$ H \ =\ \frac 12 (\hat v_X^2 + \hat v_Y^2)\,, $$
describes then the motion on the plane $(X,Y)$ with a homogeneous orthogonal magnetic field $\theta$ of a particle of unit mass. The canonical momenta are $\hat P_X = \hat v_X - \theta Y/2$ and $\hat P_Y = \hat v_Y + \theta X/2$.

\section{Noncommutative systems involving the Lie algebra commutators}
\setcounter{equation}0
The R.H.S. of the commutator \p{xy-comm} is constant. In this case, $\partial_x\,[x,y] = 
\partial_y\,[x,y] = 0$, which is true if the derivative operator satisfies the ordinary Leibnitz rules. More general, for a space of arbitrary dimension with the constant commutators
$[x_a, x_b] = i \theta_{ab}$, one can still consistently postulate 
$$ [\partial_a, x_b] = \delta_{ab}, \qquad [\partial_a, \partial_b] = 0$$
and observe that the Jacobi identities still hold.

But it is not generically true for more complicated noncommutative spaces when $[x_a, x_b]$ exhibit a nontrivial coordinate dependence. In this paper, we consider a particular class of such spaces characterised by the commutators
\be
\lb{xjxk-comm} 
[x_a, x_b] \ =\ i\theta f_{abc}\, x_c\,,
 \ee
where $f_{abc}$ are structure constants of an arbitrary Lie algebra. Such spaces were considered in \cite{Skoda} where it was shown that, to satisfy the Jacobi identities, one has to postulate the following nontrivial commutators:
\be
\lb{Skoda}
[\tilde \partial_a, x_b] \ =\ \left[   \frac{i\theta F(\tilde \partial)}{e^{i\theta F(\tilde \partial)}-1 } \right]_{ab} \ =\ \delta_{ab} + \sum_{n=1}^\infty (-i\theta)^n \frac {B_n^+}{n!} [F^n(\tilde \partial)]_{ab}, \qquad [\tilde \partial_a, \tilde \partial_b] = 0\,,
\ee
where \be
F_{ab}(\tilde\partial) = f_{acb}\, \tilde \partial_c \ee
 and $B_n^+$ are the Bernoulli numbers.\footnote{
The Bernoulli numbers are defined from the Taylor expansion,
$$   \frac t{1 - e^{-t}} \ =\ \sum_{n=0}^\infty \frac {B^+_n t^n}{n!}\,.$$
The first few Bernoulli numbers are $B_0^+ = 1, B_1^+=1/2, B_2^+ = 1/6, B_3^+ = 0$ and  $B_4^+ = -1/30$. Note that $B^+_{2m+1} $ all vanish for $m > 0$.}

 The Jacobi identities for the algebra \p{xjxk-comm}, \p{Skoda} are satisfied, as one can be convinced quite directly in each order of the expansion in $\theta$, using the recursive relations for the Bernoulli numbers. In the limit $\theta \to 0$, the algebra elements $\tilde \partial_a$ acquire the meaning of the ordinary partial derivatives $\partial_a$ with respect to the commuting variables  $x_a$.

In the spirit of our remark in the Introduction, we perform the quantum canonical transformation,
\be
\hat p_a =  -i\tilde  \partial_a  \ &\to& \ -X_a, \nn
x_a  \ &\to& \ \hat P_a,
 \ee
so that the elements $\tilde \partial_a$ are traded 
 for the commuting coordinates $X_a$, whereas the original coordinates $x_a$ are realized as the linear differential operators,
\be
\lb{Dj}
x_a \to  -i\hat D_a\ = \  -i 
\left[  \frac{F(\theta X)}{1 - e^{- F(\theta X)}} \right]_{ab} \frac \partial {\partial X_b} \,.
\ee
The operators $\hat D_a$ satisfy the relations
\be
\lb{alg-D}
[\hat D_a, \hat D_b] \ =\ -\theta f_{abc} \hat D_c \,.
 \ee
In other words, we have constructed a representation of the generators of an arbitrary Lie group not in the matrix form, as is usually done, but in the form of linear differential operators acting on the space of the dimension coinciding with the dimension of the group. The operators $\hat D_a$ have a transparent geometric meaning. They are none other than the generators of  the {\it right} group rotations. Indeed, one can show that
\be
\lb{left-rot}
 e^{i \theta X_a t_a}e^{i\theta \epsilon_a t_a} \ =\ e^{i\theta t_a(X_a + \delta X_a)}\,,
\ee
where\footnote{A simple derivation of this relation (which represents a particular case of the general Baker-Campbell-Hausdorff formula) is given in the appendix.} 
\be 
\lb{delta-X}
\delta X_a \ =\  \sum_{n=0}^\infty \frac {B_n^+ [F^n(\theta X)]_{ab}}{n!} \epsilon_b \ + o(\epsilon) \,.
\ee
And this is, of course, the reason why $\hat D_a$ satisfy the Lie algebra commutation relations.

This paper is devoted to studying the properties of the Hamiltonian 
\be
\label{Hquant}
\hat H \ =\ - \frac 12 \hat D_a^2 \,.
 \ee

The classical Hamiltonian corresponding to the quantum Hamiltonian \p{Hquant} reads
  \be
\label{Hcl}
H^{\rm cl} \ =\ \frac 12 g^{jk} P_j P_k\,,
 \ee
where 
\be
\label{contrmetr}
g^{jk} \ =\ E_a{}^j E_a{}^k
 \ee
with 
\be
\label{R}
E_a{}^j \ =\ \left[  \frac {F(\theta X) }{1 - e^{- F(\theta  X)}} \right]_a^{\ j} \,.
\ee
This Hamiltonian describes the motion over the manifold with the inverse  metric $g^{jk}$. The objects  $E_a{}^j$ have the meaning of the vielbeins.\footnote{Having said that, we went to the middle of the alphabet to denote the world tensor indices, while keeping its beginning for the flat tangent space indices.}
The classical Lagrangian corresponding to the Hamiltonian \p{Hcl} depends on the covariant metric tensor and reads
 \be
\lb{L}
L \ =\ \frac 12 g_{jk} \dot X^j \dot X^k\,.
 \ee

\begin{thm}
The metric $g_{jk}$ in \p{L} coincides with the invariant metric on the group manifold.
\end{thm}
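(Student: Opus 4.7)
The plan is to identify the operators $\hat D_a$ as left-invariant vector fields on the group manifold $G$, and then to recognize $g^{jk} = E_a{}^j E_a{}^k$ as the inverse of the left-invariant metric for which the $\hat D_a$ form an orthonormal frame. Relation \p{left-rot}--\p{delta-X} already shows that an infinitesimal right multiplication $g(X) \to g(X) e^{i\theta \epsilon_b t_b}$ produces the coordinate shift $\delta X_a = E_{ab}(\theta X) \epsilon_b$, so the $\hat D_a$ are the generators of right translations on $G$. Since right translations commute with left translations, the vector fields $\hat D_a$ are automatically invariant under left shifts.

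Next I would verify that the dual coframe $\omega^a \equiv (E^{-1})^a{}_j\, dX^j$ is precisely the left Maurer--Cartan one-form $g^{-1} dg = i\theta\, \omega^a t_a$. This follows from the standard identity $e^{-A} d e^A = [(1 - e^{-\operatorname{ad}_A})/\operatorname{ad}_A]\,(dA)$ applied to $A = i\theta X_a t_a$: the adjoint action on the Lie algebra acts by the matrix $i\theta F(X)$ of \p{Skoda}, so the resulting one-form is $[(1 - e^{-F(\theta X)})/F(\theta X)]^a{}_b\, t_a\, dX^b \cdot i\theta$, and the bracket is exactly the inverse of the matrix $E$ of \p{R}. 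Consequently $g_{jk} = \delta_{ab}\,(E^{-1})^a{}_j (E^{-1})^b{}_k$, which is the textbook expression for the left-invariant metric obtained by pulling back the Euclidean inner product $\delta_{ab}$ on $T_e G$ by the left Maurer--Cartan form.

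Finally, for the metric to be bi-invariant, as the word \emph{invariant} in the statement suggests, the bilinear form $\delta_{ab}$ on the Lie algebra must also be $\operatorname{Ad}$-invariant, which is equivalent to the structure constants $f_{abc}$ being totally antisymmetric. This holds in all the cases explicitly considered in the paper (e.g.\ for compact semisimple algebras with generators normalized so that $\operatorname{tr}(t_a t_b) \propto \delta_{ab}$, and for $u(N)$). The main obstacle in the whole argument is the careful identification of $E^{-1}$ with the Maurer--Cartan form: one must be vigilant about signs and the left/right convention, since the \emph{same} matrix entering the definition of $\hat D_a$ would yield a right-invariant interpretation if the generators were inserted on the left in \p{left-rot}. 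Once this is pinned down, the remaining steps are standard Lie-group geometry.
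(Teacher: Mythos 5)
Your proposal is correct and follows essentially the same route as the paper: the crucial step in both is the Maurer--Cartan identity $\omega^{-1}\partial_j\omega = i\theta\, E_j{}^a t_a$ (the paper's Eq.~\p{derexp}, obtained there via the Karplus--Schwinger/Feynman integral representation, in your case via the standard $e^{-A}de^A$ formula), from which $g_{jk}=E_j{}^a E_k{}^a$ is read off as the pullback of $\delta_{ab}$ on the tangent space at the identity. The paper phrases this as the invariant distance $ds^2=\epsilon_a\epsilon_a$ between nearby group elements rather than as an explicit coframe pullback, and your added remark on ${\rm Ad}$-invariance of $\delta_{ab}$ (total antisymmetry of $f_{abc}$) simply makes explicit the bi-invariance that the paper encodes in its trace formula \p{Einstein}.
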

\begin{proof}
The canonical metric on  a group manifold, which is invariant under the left and right group rotations, is\footnote{It is familiar to physicists, appearing e.g. in the effective chiral Lagrangian in QCD \cite{chiral},
$$
{\cal L} \ \propto \ {\rm Tr} \left\{\partial_\mu \omega^{-1} \partial^\mu \omega  \right\}
$$
with $\omega \in SU(2)$ or $SU(3)$.
} 
\be
\lb{Einstein}
g_{jk} \ = \  \frac 1{h\theta^2}  {\rm Tr} \left\{ \partial_j \omega^{-1} \partial_k \omega \right\} \,,
\ee
where
\be
\lb{exp-param}
\omega(X) \ = \ \exp\{i\theta X^j t_j\} 
\ee
and  $t_j$  are the generators in a given  representation with the Dynkin index $h$. Then $g_{jk}(X=0) = \delta_{jk}$.

Consider the distance between two close points $\omega(X + \delta X)$ and $\omega(X)$. For the invariant metric, this distance is the same as the distance between $\omega^{-1}(X)  \omega(X+ \delta X)  = \omega(\epsilon)$ and the group unity. The latter  is $ds^2 = \epsilon_a  \epsilon_a$. To find the metric at the vicinity of $X$, we have only express $\epsilon_a$ in terms of $\delta X^j$. To do so, we use the formula \cite{Feynman} (its derivation is explained in the appendix)
\be
\lb{derexp}
\frac 1\theta \, \omega^{-1}  (-i\partial_j\omega) \ =\ \int_0^1 d\tau\, e^{-\tau  R} t_j e^{\tau R} \  = \ t_j - \frac 12 [R, t_j] + \frac 16 [R, [R, t_j]] 
-  \ldots \ =\ E_j{}^a t_a
 \ee
where $R = i\theta t_j X^j$ and 
\be
\lb{Q}
E_j{}^a \ =\ \left[ \frac {1- e^{-F(\theta X)}}{F(\theta X)} \right]_j^{\ a} \ =\ \sum_{n=0}^\infty \frac {[ F^n(-\theta X)]_j{}^a}{(n+1)!} 
 \ee  
is the inverse vielbein.
Then $\epsilon^a = E_j{}^a \delta X^j$ and the metric is 
\be
\lb{covmetr}
g_{jk} = E_j{}^a E_k{}^a\,,
 \ee
 which matches \p{contrmetr}.
 
\end{proof}

Thus, the classical Hamiltonian \p{Hcl} describes the motion of a particle along the group manifold. 

For $SU(2) \equiv S^3$, the sums for the vielbeins in \p{R} and \p{Q} can be done:\footnote{This formula coincides with Eq. (4.80) in recent \cite{Kupr}, though its interpretation there was completely different.}
 \be
\lb{E-su2}
E_a{}^j \ &=&\ \delta_{aj} + \frac \theta 2 \varepsilon_{apj} X_p  + \left(\delta_{ja} -  \frac{X_a X_j}{X^2} \right)  \left[\frac {\theta X/2}{\tan (\theta X/2)}-1 \right]
\nn 
E_j{}^a \ &=& \ \delta_{ja} -  \frac {2 \sin^2 (\theta X/2)}{\theta X^2} \varepsilon_{jpa}X_p +  \left(\delta_{ja} -  \frac{X_a X_j}{X^2} \right)  \left[ \frac {\sin (\theta X)}{\theta X} -1\right] 
\,,
 \ee
 where $X = \sqrt{X_pX_p}$ and the positions of the indices in the right-hand sides do not have a tensorial meaning and are irrelevant. 

The metric \p{contrmetr}, \p{covmetr} reads
\be
\lb{metr-su2}
g^{jk} \ &=& \ A^{-1}(X) \left(\delta_{jk} - \frac {X_j X_k}{X^2} \right)  + \frac {X_j X_k}{X^2}, \nn
g_{jk}  \ &=&\ A(X) \left(\delta_{jk} - \frac {X_j X_k}{X^2} \right) + \frac {X_j X_k}{X^2}\,,
\ee
where 
$$ A(X) \ =\ \frac {4 \sin^2 (\theta X/2)}{\theta^2 X^2}\,.$$
At the vicinity of the origin, 
\be
g_{jk} \ =\ \delta_{jk} + \frac {\theta^2}{12}(X_j X_k - X^2 \delta_{jk})  + O(X^4)\,.
\ee
The corresponding scalar curvature is $R = 3\theta^2/2$. Bearing in mind that $R = 6/\rho^2$ where $\rho$ is the radius of $S^3$, we derive
\be
\lb{radius}
\rho  \ = \  \frac 2\theta\,.
 \ee
The  metric \p{covmetr}  looks quite different from the familiar conformally flat metric on $S^3$,  
\be
\lb{stereo}
g_{jk} \ =\ \frac {\delta_{jk}}{[1 + Z^2/(4\rho^2)]^2}\,.
\ee
But  these two metrics describe the same geometry, which means that the only difference between them is the  choice of coordinates $Z \to X$ [see Eqs. \p{zamenaYX},\p{zamenaYZ} below].

\vspace{1mm}

Consider now the  quantum Hamiltonian \p{Hquant} implying a particular way of ordering of the coordinates and canonical momenta.

\begin{thm}
The Hamiltonian \p{Hquant} coincides up to the factor $-1/2$ with the Laplace-Beltrami operator on the group manifold,
\be 
\lb{Laplace}
\triangle \ =\  \frac 1{\sqrt{g}} \partial_j (\sqrt{g} g^{jk}) \partial_k\,,
 \ee
with the invariant metric \p{covmetr}.
\end{thm}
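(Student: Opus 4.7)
The plan is to expand both operators in local coordinates and match them term by term. Writing $\hat D_a = E_a{}^j\partial_j$ and applying Leibniz, one obtains
\[
\hat D_a^2 \ =\ g^{jk}\,\partial_j\partial_k \ +\ E_a{}^j(\partial_j E_a{}^k)\,\partial_k,
\]
after inserting $g^{jk} = E_a{}^j E_a{}^k$ from \p{contrmetr}. The Laplace--Beltrami operator, similarly expanded, reads
\[
\triangle \ =\ g^{jk}\,\partial_j\partial_k \ +\ \frac{1}{\sqrt g}\,\partial_j\bigl(\sqrt g\, g^{jk}\bigr)\,\partial_k,
\]
so the principal (second-derivative) parts will agree automatically, and everything will reduce to matching the first-order coefficients.

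For that, I would substitute $g^{jk} = E_a{}^j E_a{}^k$ into the Laplace--Beltrami coefficient and apply Leibniz once more, which yields
\[
\frac{1}{\sqrt g}\,\partial_j\bigl(\sqrt g\, g^{jk}\bigr) \ =\ E_a{}^j(\partial_j E_a{}^k) \ +\ \frac{E_a{}^k}{\sqrt g}\,\partial_j\bigl(\sqrt g\, E_a{}^j\bigr).
\]
Since the vielbein $E_a{}^k$ is invertible wherever the exponential parametrization is regular, the theorem then reduces to showing that
\[
\frac{1}{\sqrt g}\,\partial_j\bigl(\sqrt g\, E_a{}^j\bigr) \ =\ 0 \qquad \text{for each } a,
\]
i.e., that each $\hat D_a$ is divergence-free with respect to the invariant density $\sqrt g$.

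The key step is to establish this divergence-freeness, and I would argue it group-theoretically, using the previous theorem. By that theorem, $g_{jk}$ is the bi-invariant metric on the group, and by \p{left-rot}--\p{delta-X} the operators $\hat D_a$ generate the right group rotations in the parametrization $\omega(X)=\exp(i\theta X^jt_j)$. The right rotations therefore act as isometries of $g$, which makes the $\hat D_a$ Killing vector fields; any Killing vector $V^j$ obeys $\nabla_j V^j = \tfrac{1}{\sqrt g}\partial_j(\sqrt g\, V^j)=0$, because its flow preserves the Riemannian volume element. Equivalently, $\sqrt g\,d^nX$ is, up to normalization, the bi-invariant Haar measure, automatically annihilated by any translation-generating vector field. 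I expect the main obstacle to be organizational --- keeping the ``left versus right'' conventions straight so that the $\hat D_a$ really are Killing fields for the same $g_{jk}$ that enters the Laplace--Beltrami operator --- rather than technical; once that bookkeeping is done, the calculation is standard.
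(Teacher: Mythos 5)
Your proof is correct, but it takes a genuinely different route from the paper's. The paper argues by symmetry and uniqueness: since $\hat H=-\frac12\hat D_a\hat D_a$ commutes with the right-rotation generators $\hat D_a$ and also with the left-rotation generators $\hat D'_a$ (built with the Bernoulli numbers $B_n^-$), it is bi-invariant, and the only second-order operator invariant under both left and right rotations is, up to a constant, the Laplace--Beltrami operator for the bi-invariant metric. You instead expand both operators explicitly, observe that the principal parts coincide because $g^{jk}=E_a{}^jE_a{}^k$, and reduce the first-order mismatch to the statement $\partial_j(\sqrt g\,E_a{}^j)=0$, which you then obtain from the fact that the $\hat D_a$ generate right translations, i.e.\ isometries of the bi-invariant $g$ of Theorem~1, so they are Killing fields and preserve the Haar density $\sqrt g\,d^nX$. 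Your version is more self-contained and buys two things the paper's argument leaves implicit: it fixes the proportionality constant to be exactly one (in the paper this requires a separate comparison of principal symbols), and it sidesteps the uniqueness claim, which as stated is delicate for non-simple algebras such as $u(N)$, where $\sum_{a\in su(N)}\hat D_a^2$ and $\hat D_0^2$ are independent bi-invariant second-order operators. What the paper's argument buys in exchange is brevity and a clear display of the underlying bi-invariance, with no coordinate computation at all. One small caveat applies to both proofs alike: the Killing property of the right-translation generators with respect to the metric built by contracting flat indices with $\delta_{ab}$ presupposes totally antisymmetric structure constants (compact-type algebras), which is the setting of the paper; with that bookkeeping, your argument is complete.
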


\begin{proof}

Clearly the operator $(\hat D_a)^2$ commutes with $\hat D_a$. This means that the Hamiltonian \p{Hquant} is invariant under right group rotations. But it must  also be invariant under left group rotations,
 \be
\lb{right-rot} 
 e^{i \theta t_j X^j} \ \to \  e^{i\theta  t_j \epsilon^j} e^{i \theta t_j X^j}   \,. 
 \ee
Indeed,  right and left  rotations commute and their generators $\hat D_a$ and $\hat D'_a$ must also commute. Both  of them can be represented as infinite sums   
\be
 \hat D_a \ &=& \ \sum_{n=0}^\infty \frac {B_n^+  \,\theta^n [ F^n(X)]_a{}^p}{n!} \, \partial_p\,, \nn
\hat D'_a \ &=& \ \sum_{n=0}^\infty \frac {B_n^- \, \theta^n [ F^n(X)]_a{}^p}{n!} \, \partial_p\,.
\ee
The only difference is that the sum for $\hat D_a$ involves the Bernoulli numbers $B_n^+$, while the sum for  $\hat D'_a$ involves the Bernoulli numbers  
 $B_n^-$. The latter  coincide with $B_n^+$ for all  $n$ except $n=1$ where $B_1^- = - B_1^+ = - 1/2$. 

Left and right rotations commute as well as their generators. The vanishing of the commutator  $[\hat D'_a, \hat D_b] \ =\ 0$ entails the vanishing of $[\hat  D'_a, \hat H^{\rm qu}]$.
The only second order differential operator that is invariant under both left and right group rotations is, up to a constant, the Laplace-Beltrami operator.  
 
\end{proof} 

Note that the operators $\hat D'_a$ satisfy the commutation relations
\be
\lb{alg-D1}
[\hat D'_a, \hat D'_b] \ =\ \theta f_{abc} \hat D'_c
 \ee
with the opposite sign, compared to \p{alg-D}. The origin of this extra sign is quite clear bearing in mind \p{left-rot} and \p{right-rot}. The composition of two right rotations $g_1$ and $g_2$ is $g_1 g_2$, while for left rotations it is $g_2 g_1$.

Another remark is the following. The Hamiltonian \p{Laplace} seems to be not Hermitian.
But it is nothing but an ``optical illusion". It {\it is} Hermitian in the Hilbert space where the
wave functions are normalized with the covariant measure,
 \be
\lb{measure}
\int d^d X \sqrt{g} \left| \Psi(X) \right|^2 \ =\ 1\,,
 \ee
$g = \det \| g_{jk} \|$.
In the Hilbert space with the flat measure $d^d X$, the Hamiltonian would include extra wrapping factors: 
$H^{\rm flat} = g^{1/4} H^{\rm flat} g^{-1/4}$. 

The same concerns the operator $\hat D_a$ defined in \p{Dj}: it is anti-Hermitian in the Hilbert space including the factor $\sqrt{g}$ in the measure.

\vspace{1mm}

We hasten to comment, however, that the representation \p{Dj} for $\hat D_a$ is not unique. For example, for $su(2)$, the following nice representation is possible:\footnote{Cf. Eq. (37) in Ref. \cite{ital}.}
\be
\lb{new-D}
\hat D_a \ =\ \frac \partial {\partial Y_a} - \frac \theta 2 \varepsilon_{abc} Y_b \,\frac \partial {\partial Y_c} + \frac {\theta^2}4 Y_a Y_b \,\frac \partial {\partial Y_b}\,.
\ee
The algebra \p{alg-D} still holds.

The operator $(\hat D_a)^2$ describes then the motion of a particle over the 3-dimensional manifold with the metric 
\be
\lb{metr-S3}
g^{jk} &=& \delta^{jk} + \kappa (Y^2 \delta^{jk} + Y^j Y^k) + \kappa^2 Y^2 Y^j Y^k \,, \nn
g_{jk}  &=& \frac {\delta_{jk}}{1+ \kappa Y^2} - \frac {\kappa Y_j Y_k}{(1 + \kappa Y^2)^2}\,,
 \ee
where $\kappa = \theta^2/4$.

The corresponding Christoffel symbols are
\be
\lb{Gamma}
\Gamma^p_{kl} \ =\ - \frac \kappa {1 + \kappa Y^2} (Y_k \delta_{pk} + Y_l \delta_{pk} )\,.
 \ee
The  Riemann tensor reads 
\be
&&R^i_{kpl} \ =\ \partial_p \Gamma^i_{kl} - \partial_l \Gamma_{kp}^i + \Gamma_{np}^i \Gamma_{kl}^n - \Gamma_{nl}^i \Gamma_{kp}^n \nn
&&=\ \frac {\kappa}{1+ \kappa Y^2}(\delta_{kl} \delta_{ip} -  \delta_{kp} \delta_{il}) + \frac {\kappa^2 Y_k (Y_p \delta_{il} - Y_l \delta_{ip})}{(1 + \kappa Y^2)^2} \ =\ \kappa (\delta^i_p g_{kl} - \delta^i_l g_{kp})\,. 
\ee
 Hence $R_{ikpl} \ =\ \kappa (g_{ip} g_{kl} - g_{il} g_{kp})$. 

It follows that all the curvature invariants are constant:
\be
\lb{invar}
&&R \ = \ 6\kappa = \frac {3\theta^2}2, \qquad  R_{kl} R^{kl} \ =\ 12\kappa^2  \ =\ \frac {3\theta^4}4, \qquad 
R_{kp} R^{pn}R_n^k  \ =\ 24\kappa^3 \ =\  \frac {3\theta^6}8, \nn
 &&R_{ikpl} R^{ikpl} \ =\  12\kappa^2  \ =\ \frac {3\theta^4}4\,, \quad \ldots
\ee
One is then tempted to say that our manifold represents the ``round" 3-sphere of radius \p{radius},
but it is not exactly so.  The determinant of the metric is 
$$ g \ =\ \frac 1 {(1 + \kappa Y^2)^4}.$$
The volume of the manifold,
\be
V \ =\  \iiint_{-\infty}^\infty \sqrt{g} \, d^3 x \ =\ \pi^2 \rho^3\,,
 \ee
is two times smaller than the volume of $S^3$, which means that we are dealing with the projective space  $RP^3 = S^3/Z_2 \equiv SO(3)$.
In fact, one can confirm it, writing an explicit coordinate change \cite{Kuprnew},
that brings the metric \p{metr-S3} in the form \p{metr-su2}:
 \be
\lb{zamenaYX}
Y^p \ =\ X^p \, \frac {2\tan \left(\theta  X/2 \right)} {\theta  X}\,.
 \ee 
 Then $Y^p = \infty$ corresponds to $X = \sqrt{X^p X^p} = \pi/\theta$. And the latter value corresponds to the equator on $S^3$ in the exponential parameterization \p{exp-param}.

Another variable change,
\be
\lb{zamenaYZ}
Y^p \ =\ Z^p \, \frac {1} {1 - \kappa Z^2/4}\,,
 \ee 
brings the metric \p{metr-S3} in the habitual conformally flat form.

\subsection{Higher spheres}

The expressions \p{metr-su2} and  \p{metr-S3} describe the metric of $S^n$ in any dimension. Indeed, the stereographic projection leading to the metric \p{stereo}
can be performed in any dimensions, and the variable changes \p{zamenaYZ}   \p{zamenaYX} bring it in the forms \p{metr-su2}, \p{metr-S3} in any dimension.

However, the manifolds $S^3$ and $S^7$ are special. These spheres are parallelizable and admit quaternion (resp. octonion) algebraic structure. 
The case of $S^3$ we already discussed. It is just a group manifold and the Hamiltonian describing its motion is related to the Lie algebra \p{alg-D}. But the motion over $S^7$ can also be associated with a not so simple nonlinear algebra. 

 Define the operators $\hat D_A  = E_A^J \partial_J$ with the vielbeins 
\be
\lb{EAJ}
E_A{}^J \ = \ \delta_{AJ} + \frac \theta 2 \eta_{APJ} X_P  + \left(\delta_{AJ} -  \frac{X_A X_J}{X^2} \right)  \left[\frac {\theta X/2}{\tan (\theta X/2)}-1 \right]\,.
 \ee
They have the same form as in \p{E-su2}, but $X_J$ are now the coordinates on $S^7$ and the antisymmetric tensor $\eta_{ABC}$ defines the rules of octonion multiplication,\footnote{Different choices for $\eta_{ABC}$ are possible. One of them is
$$ \eta_{123} = \eta_{435} = \eta_{147} = \eta_{165} = \eta_{246} = \eta_{123} = \eta_{367} \ =\ 1\,,$$
the other nonzero components being restored by antisymmetry.}

$e_A e_B = -\delta_{AB} + \eta_{ABC} e_C$. Using the identity 
\be
\lb{et-et}
\eta_{ABC} \eta_{DEC} \ =\ \delta_{AD}\delta_{BE} - \delta_{AE} \delta_{BD} -  \eta_{ABDE}
 \ee
with 
 \be
\lb{eta4}
 \eta_{ABDE} \ =\ \frac 16 \epsilon_{ABDEFGH}\, \eta_{FGH},
 \ee
 one can be convinced that
the convolution of the vielbeins \p{EAJ} gives the metric on $S^7$, which, as we have mentioned, has the same form as in \p{metr-su2}, but with seven coordinates.

The difference is that the operators $\hat D_A$ do not form a Lie algebra as in \p{alg-D}, but their commutators have a more complicated form with the coordinate dependence in the right hand side:\footnote{In different terms, this result was derived in \cite{Kupr-nonlin} using \p{et-et}
and the identity 
$$ \eta_{ABC} \,\eta_{AFDE} \ =\ \delta_{CF}\, \eta_{BDE} +  \delta_{CD}\, \eta_{BEF}  + \delta_{CE}\, \eta_{BFD}   - (B \leftrightarrow C)\,, 
$$
 but we translated  it in our notation. 
} 
 \be
[\hat D_A, \hat D_B] \ &=&\ -\theta \eta_{ABC}\hat D_C +   \frac {\theta \sin (\theta X)} X  \, \eta_{ABCP}X_P \hat D_C  \nn
&+& 2\frac {\theta \sin^2 (\theta X/2)}{X^2} \, \eta_{ABCP} \eta_{CDQ} X_P X_Q \hat D_D\,.
\ee

\section{Gurevich-Saponov model.}
\setcounter{equation}0

\subsection{$u(2)$ model.}

In Refs. \cite{Gurevich1,Gurevich2},  a noncommutative quantum mechanical system was studied with the following nontrivial commutation relations:
\be
\lb{commGur-xt}
 \,[x_a, x_b] \ =\ i\theta \epsilon_{abc} x_c, \qquad[x_a, \tau] \ =\ 0\,,
\ee
supplemented by
 \be
\lb{commGur-der}
&& [\tilde \partial_a, x_b] \ =\ \delta_{ab}\left( 1 + \frac {i\theta}2 \tilde \partial_\tau \right) + \frac {i\theta}2 \epsilon_{abc} \tilde \partial_c, \qquad 
[\tilde \partial_\tau, x_a] \ =\ - \frac {i\theta}2 \tilde \partial_a, \nn 
&&[\tilde \partial_\tau, \tau] \ =\ 1 + \frac {i\theta}2 \tilde \partial_\tau, \qquad 
[\tilde \partial_a, \tau] \ =\ \frac {i\theta}2 \tilde \partial_a\,, \qquad
 [\tilde \partial_a, \tilde \partial_b] \ =\ [\tilde \partial_a, \tilde \partial_\tau] \ =\ 0\,.
 \ee
It is easy to check that the Jacobi identities hold. This model belongs to the large class of the models \p{xjxk-comm}, but the algebra here is $u(2)$, which is not simple. Note that the R.H.S. of the commutators in \p{commGur-der} does not represent an infinite series as in \p{Skoda}, but has a simple form involving only the first generalized derivatives [cf.  Eq. \p{new-D}, which, in contrast to \p{Dj}, involves only the linear and quadratic terms in $\tilde \partial_a \to -iY_a$ ].

Proceeding in the same way as before, we may trade the coordinates for momenta and vice versa to arrive at the ordinary QM system with four commuting dynamic variables $X_a, T$ and former coordinates and now momenta being realized as the differential operators\footnote{We added a constant in $P_0$ to make it Hermitian with the flat measure $d\mu = d^3X DT$.}
\be
\lb{mom-Gur}
P_a \ =\ -i \hat D_a \ &=&\ -i \left[ \left( 1 + \frac {\theta T}2\right) \frac \partial {\partial X_a} - \frac \theta 2 X_a \frac \partial {\partial T} + \frac \theta 2 \epsilon_{abc} X_b \frac \partial {\partial X_c} \right]\,, \nn 
P_0\ =\ -i\hat D_0 \ &=& \ -i \left[\left( 1 + \frac {\theta T}2 \right) \frac {\partial}{\partial T} + \frac \theta 2 X_a \frac \partial {\partial X_a} + \theta \right]\,.
 \ee

The algebra $$[\hat D_a, \hat D_b] = - \theta \epsilon_{abc} \hat D_c, \qquad [\hat D_a, \hat D_0] = 0$$ holds. 

Consider now the operator 
\be
\lb{HGur}
\hat H \ =\ -\frac 12 (\hat D_0 \hat D_0 + \hat D_a \hat D_a)\,.
 \ee
Its classical counterpart is
\be
\lb{Hcl-Gur} 
H^{\rm cl} \ =\ \frac {P_0^2 + P_a^2}2 \left[1 + \theta T + \frac {\theta^2}4 (T^2 + X_a^2) \right]\,.
 \ee
This Hamiltonian describes the motion of a particle over a 4-dimensional manifold with the conformally flat metric 
\be
\lb{metrGur}
g_{\mu\nu}  = \ \frac {\delta_{\mu\nu}}{ F(T, X_a)} \ =\ \frac {\delta_{\mu\nu}} {1 + \theta T + \theta^2 (T^2 + X_a^2)/4}\,.
\ee
 The Ricci tensor has the following components:
 \be
\lb{Ricci}
R_{00} \ =\ \frac {\theta^4}{8F^2}, \qquad R_{0a} \ =\ - \frac {\theta^3 (1 + \theta T/2) X_a}{4F^2}, \qquad R_{ab } \
 =\ \frac {\theta^2}{2F} \left( \delta_{ab} - \frac {\theta^2 X_a X_b}{4F} \right)\,.
 \ee
An explicit calculation shows that all the curvature invariants made from the Ricci tensor are constants:
\be
\lb{Ricciinv}
R = R_{\mu\nu} g^{\mu\nu} \ =\ \frac {3\theta^2}2\,, \qquad R_{\mu\nu} R^{\mu\nu} \ =\  \frac {3\theta^4}4, \qquad R_{\mu\nu} R^{\nu\rho}
R_\rho{}^\mu \ =\ \frac {3\theta^6}8, \ \ {\rm etc.}
 \ee
They are the same as for a {\it 3-dimensional} ``round" sphere of radius \p{radius} [see Eq. \p{invar}].

 In other words, our manifold is\footnote{It is not $S^3 \times S^1$ or $SO(3) \times S^1$  because the integral 
$$\int d^3 X dT \sqrt{g} =  \int \frac {d^3 X dT}{F^2}$$
diverges and the manifold is not compact.} $S^3 \times \mathbb{R}$ or maybe $SO(3) \times \mathbb{R}$ ---  it is difficult to say based solely on the form
\p{metrGur} of the metric where the coordinates $X_a$ and $T$ are intertwined. 
This result is quite natural as the Gurevich-Saponov model involves the nontrivial commutators \p{commGur-xt} of the $u(2)$ algebra. After interchanging the coordinates and momenta, we are arriving at the system that describes the motion along  $U(2)$ in a non-compact realization.   

\subsection{Higher $N$}

One can also consider \cite{Gurevich2} the $u(N)$ generalization of the algebra  \p{commGur-xt}, \p{commGur-der}. Introduce $N^2 + N^2$ elements 
$l_j^n$, $\tilde \partial_j^n$ and postulate the following commutators:
 \be
\lb{alg-uN}
\, [l_j^n, \l_k^m] \ =\ \theta (\delta_j^m l_k^n - \delta_k^n l_j^m),  \qquad
 [\tilde \partial_j^n, l_k^m] \ = \ \delta_j^m \delta_k^n + \theta \delta_j^m \tilde \partial_k^n\,. 
 \ee
The first relation is the habitual $u(N)$ algebra. It is supplemented by the commutation laws for the quasiderivatives $\tilde \partial_i^n$. It is not difficult to check that the Jacobi identities are satisfied. 

To make contact with \p{commGur-xt}, \p{commGur-der}, we introduce the notation
\be
\lb{compact-var}
x_A \ =\ (t_A)_n^j \, l_j^n, \qquad \tilde \partial_A = 2(t_A)_n^j \, \tilde \partial_j^n\,,
 \ee
where $A = (a,0)$ and $t_A$ are the generators of $U(N)$ in the fundamental representation --- the Hermitian $N \times N$ matrices   normalized in a habitual way
$${\rm Tr} \{t_A t_B\} \ = \ \frac12 \delta_{AB}\,.$$
In particular, 
$ t_0  = \mathbb{1}/\sqrt{2N}$, so that $x_0 \equiv  t = l_j^j/\sqrt{2N}$ and $\tilde \partial_t = \sqrt{2/N} \, \tilde \partial_j^j$. It is also convenient to introduce $\tau = it, \tilde \partial_\tau = -i \tilde \partial_t$. In these terms, the algebra \p{alg-uN} reads   

 \be
\lb{alg-uN-comp}
&&[x_a, x_b] \ =\ i\theta f_{abc} x_c, \qquad [x_a, \tau] \ =\ 0, \nn
&&[\tilde \partial_a, x_b] \ =\ \delta_{ab} \left( 1 + \frac {i\theta}{\sqrt{2N}} \,\tilde \partial_\tau \right) + \frac {i\theta}2 (f_{abc} - i d_{abc}) \, \tilde \partial_c, \nn
&& [\tilde \partial_a, \tau] \ =\  - [\tilde \partial_\tau, x_a]  \ =\  \frac {i\theta}{\sqrt{2N}} \tilde \partial_a, \qquad [\tilde \partial_\tau, \tau] \ =\ 
1 + \frac {i\theta}{\sqrt{2N}} \, \tilde \partial_\tau\,,
 \ee
where $f_{abc}$ are the $su(N)$ structure constants and $d_{abc}$ is a symmetric tensor entering the relation
$$ {\rm Tr}\{t_a t_b t_c \} \ =\ \frac 14 (i f_{abc} + d_{abc})\,.$$
For $N=2$, $d_{abc}$ vanishes and \p{alg-uN-comp} coincides with \p{commGur-xt}, \p{commGur-der}.

We perform now our quantum canonical transformation to represent this algebra in terms of the commuting coordinates (former momenta) $X_a, T$ and the differential operators
\be
\lb{mom-Gur}
x_a \to P_a \ =\ -i \hat D_a \ &=&\ -i \left[ \left( 1 + \frac {\theta T}{\sqrt{2N}}\right) \frac \partial {\partial X_a} - \frac \theta {\sqrt{2N}} X_a \frac \partial {\partial T} + \frac \theta 2 (f_{abc} + i d_{abc}) X_b \frac \partial {\partial X_c} \right]\,, \nn 
\tau \to P_0\ =\ -i\hat D_0 \ &=& \ -i \left[\left( 1 + \frac {\theta T}{\sqrt{2N}} \right) \frac {\partial}{\partial T} + \frac \theta {\sqrt{2N}} X_a \frac \partial {\partial X_a} \right]\,.
 \ee
 The algebra $$[\hat D_a, \hat D_b] = - \theta f_{abc} \hat D_c, \qquad [\hat D_a, \hat D_0] = 0$$ holds.\footnote{It can be checked explicitly using the identities
(see e.g. Appendix A of Ref. \cite{Fadin})
\be
\lb{identities}
&&f_{abe} f_{cde}  + f_{cae} f_{bde} + f_{bce} f_{ade}  \ =\ 0, \nn
&&f_{abe} d_{cde} + f_{ace}  d_{dbe} +  f_{ade}  d_{bce} \ =\ 0, \nn
&&f_{abe} f_{cde} \ =\ \frac 2N (\delta_{ac} \delta_{bd} - \delta_{bc} \delta_{ad}) + d_{ace} d_{bde} -  d_{bce} d_{ade}\,.
\ee
}
Note now the intrinsic complexity $\sim f_{abc} + i d_{abc}$ in the expression for $\hat D_a$. That means that the would-be vielbein
\be
E^j_a \ =\ \left( 1 + \frac {\theta T}{\sqrt{2N}} \right) \delta_{aj} + \frac \theta 2 (f_{abj} + i d_{abj}) X_b 
\ee
is complex, and the same concerns the metric $g^{jk} = E^j_A E^k_A$. The determinant of the latter is also complex.

Complex metric and complex volume have little geometric sense. Also the canonical momenta $\hat D_a$ and the Hamiltonian \p{HGur}  are not Hermitian for $N >2$. There are  some Hamiltonians that do not seem to be  Hermitian, but are in fact Hermitian in disguise and have a real spectrum \cite{Bender}. We do not think that it is so in our case, but a special study of this question would be interesting.

\section*{Acknowledgements}
I am deeply indebted to Dmitry Gurevich for many illuminating discussions and the collaboration at the initial stage of this work. I thank Vladislav Kupriyanov for important comments, for bringing my attention to Refs. \cite{Kupr,ital,Kupr-nonlin} and for informing me about the paper \cite{Kuprnew} prior to publication. I also acknowledge
 useful discussions with Maxim Kontsevich, Zoran \v{S}koda and Arkady Vainshtein.

\section*{Appendix}
\def\theequation{A.\arabic{equation}}
\setcounter{equation}0
To make the paper self-contained, we will explain here how the identities \p{derexp} and \p{delta-X} are derived.

The formula \p{derexp} has a nice physical interpretation \cite{Feynman}. Consider the Hamiltonian $\hat H = \hat H_0 + \hat V$ and the corresponding evolution operator
\be 
U \ =\ e^{i(\hat H_0 + \hat V)t} \ =\ \lim_{N \to \infty} \left[ 1 + \frac {it}N (\hat H_0 + \hat V) \right]^N\,.
\ee
Suppose that the perturbation $\hat V$ is small and keep only the linear in $\hat V$ terms. The perturbation can be inserted at any time moment $t'$ between 0 and $t$, and we obtain
\be
U \ =\ e^{i\hat H_0 t} + i\int_0^t dt' e^{i(t - t') \hat H_0} \hat V  e^{it' \hat H_0} + \ldots \,. 
 \ee
Choosing $\hat H_0 = \theta X^j t_j, \ \hat V = \theta \,\delta X^j t_j, \ t=1$, renaming $t' \to \tau$ and recalling the notation $\omega = e^R = e^{i\theta X^j t_j}$, we derive
\be
\omega(X+ \delta X) = \omega(X) + \delta X^j \partial_j \omega(X) + \ldots\ = \
\omega(X)\left[ 1 + i \theta \int_0^1 d\tau\, e^{-\tau R} \delta X^j t_j e^{\tau R} + \ldots\right],
  \ee
from which \p{derexp} follows.

 As a consequence, we derive for the product in \p{left-rot} that $\epsilon^a$ is equal to $E^a_j \,\delta X^j$, with the vielbein $E^a_j$ having the form \p{Q}. Then $\delta X^j$ is expressed via $\epsilon^a$ with the vielbein $E_a^j$ written in \p{delta-X}.

\end{document}